\newcommand{\be}{\begin{equation}}
\newcommand{\en}{\end{equation}}
\newcommand{\bea}{\begin{eqnarray}}
\newcommand{\ena}{\end{eqnarray}}
\newcommand{\beano}{\begin{eqnarray*}}
\newcommand{\enano}{\end{eqnarray*}}
\newcommand{\bee}{\begin{enumerate}}
\newcommand{\ene}{\end{enumerate}}
\newcommand{\Hil}{{\cal H}}
\newcommand{\Lc}{{\cal L}}
\newcommand{\Sc}{{\cal S}}
\newtheorem{thm}{Theorem}
\newtheorem{prop}[thm]{Proposition}
\newenvironment{proof}{\noindent {\bf Proof:}}{\hfill$\Box$}
\begin{document}

\title{A no-go result for the quantum damped harmonic oscillator}

\author{F. Bagarello$^{1,2}$, F. Gargano$^{1}$, F. Roccati$^3$\\

\small{
$^1$DEIM -  Universit\`a di Palermo,
Viale delle Scienze, I--90128  Palermo, Italy,}\\
\small{ $^2$I.N.F.N -  Sezione di Napoli,}\\
\small{
$^3$Dipartimento di Fisica e Chimica, Universit\`a degli Studi di Palermo, 
via Archirafi 36, I-90123 Palermo, Italy.}\\
\small{\emph{Email addresses:}\\
fabio.bagarello@unipa.it,
francesco.gargano@unipa.it,
federico.roccati@unipa.it}
}

\date{}
\maketitle
\begin{abstract}
In this letter we show that it is not possible to set up a canonical quantization for the damped harmonic oscillator using the Bateman lagrangian. In particular, we prove that no square integrable vacuum exists for the {\em natural} ladder operators of the system, and that the only vacua can be found as distributions. This implies that the procedure proposed by some authors is only formally correct, and requires a much deeper analysis to be made rigorous.
\end{abstract}


\section{Introduction}\label{sec:intro}

The problem of quantizing dissipative systems, and the damped harmonic oscillator (DHO) in particular, has a long story. {There are two main approaches}: (i) a time-independent lagrangian, proposed by Bateman and studied by many authors along the years, see \cite{bate,vitiello,dekk,dekk2,fesh, review2} and references therein, which is based on the idea that the damped oscillator is associated to a second, amplified, oscilllator (AHO), which gains all the energy lost by the damped one; (ii) an explicitly time-dependent lagrangian, see \cite{baldiotti,cheng,dekk,review2} and references therein, which reproduce the equation of motion of the DHO without the need of introducing its amplified counterpart. More recently, another non standard, time-independent,  lagrangian has been introduced for the DHO, where it was shown that there is no need to introduce any dual oscillator. However~\cite{osc1d}, in this case the quantization proved to be quite hard, if not entirely impossible, in the sense that the Schr\"odinger equation admits no exact solution, apparently.

In what follows, we will concentrate on the Bateman's approach, and we will show that quantizing the system is, in fact, impossible within the realm of Hilbert spaces. More in detail, we will be able to diagonalize the Hamiltonian for the system in terms of pseudo-bosonic operators, but we will also prove that, contrarily to what stated in some contributions in the literature, it is not possible to use these ladder operators to construct a biorthogonal set in some Hilbert space $\Hil$, simply because there is no vacuum in any $\Hil$ for the annihilation operators constructed along the way.  This impossibility, which was somehow recognized by many authors, including one of us, \cite{bag1,bag2}, and discussed as a {\em lack of normalization} of wave-functions, in fact leaves open the possibility of working in the space of distributions.

\section{The model and its problems}\label{sect2}

The classical equation for the DHO is $m\ddot x+\gamma \dot x+kx=0$, in which $m,\gamma$ and $k$ are the physical positive quantities of the oscillator: the mass, the friction coefficient and the spring constant. The Bateman lagrangian is
\be L=m\dot x\dot y+\frac{\gamma}{2}(x\dot y-\dot xy)-kxy,
\label{21}\en
which other than the previous equation, produces also $m\ddot y-\gamma \dot y+ky=0$, the differential equation associated to the AHO. In \cite{fesh}, this is called the {\em time reverse} equation of that for $x(t)$, for obvious reasons. The conjugate momenta are $$p_x=\frac{\partial L}{\partial \dot x}=m\dot y-\frac{\gamma}{2}\,y,\qquad
p_y=\frac{\partial L}{\partial \dot y}=m\dot x+\frac{\gamma}{2}\,y,
$$
and the corresponding classical Hamiltonian is
\be
H=p_x\dot x+p_y \dot y-L=\frac{1}{m} p_xp_y+\frac{\gamma}{2m}\left(yp_y-xp_x\right)+\left(k-\frac{\gamma^2}{4m}\right)xy.
\label{22}\en
By introducing the new variables $x_1$ and $x_2$ through 
\be
x=\frac{1}{\sqrt{2}}(x_1+x_2), \qquad y=\frac{1}{\sqrt{2}}(x_1-x_2),
\label{23}\en
$L$ and $H$ can be written as follows:
$$
L=\frac{m}{2}(\dot x_1^2-\dot x_2^2)+\frac{\gamma}{2}(x_2\dot x_1-x_1\dot x_2)-\frac{k}{2}(x_1^2-x_2^2)
$$
and
$$
H=\frac{1}{2m}\left(p_1-\frac{\gamma}{2}x_2\right)^2-\frac{1}{2m}\left(p_2-\frac{\gamma}{2}x_1\right)^2+\frac{k}{2}(x_1^2-x_2^2),
$$
where $p_1=\cfrac{\partial L}{\partial \dot x_1}=m\dot x_1+\cfrac{\gamma}{2}\,x_2$ and $p_2=\cfrac{\partial L}{\partial \dot x_2}=m\dot x_2-\cfrac{\gamma}{2}\,x_1$. By putting $\omega^2=\cfrac{k}{m}\,-\cfrac{\gamma^2}{4m^2}$ we can rewrite $H$ as follows:
\be
H=\left(\frac{1}{2m}p_1^2+\frac{1}{2}m\omega^2x_1^2\right)-\left(\frac{1}{2m}p_2^2+\frac{1}{2}m\omega^2x_2^2\right)-\frac{\gamma}{2m}(p_1x_2+p_2x_1).
\label{24}\en
In this section we will mostly consider $\omega^2>0$. The case {$\omega^2\leq0$}  will be briefly discussed later.

Following \cite{nakano} we impose the following canonical quantization rules between $x_j$ and $p_k$: $[x_j,p_k]=i\delta_{j,k}\mathds{1}$, working in unit $\hbar=1$. Here $\mathds{1}$ is the identity operator. This is equivalent to the choice in \cite{fesh}. Ladder operators can now be easily introduced:
\be a_k=\sqrt{\frac{m\omega}{2}}\,x_k+i\sqrt{\frac{1}{2m\omega}}\,p_k,
\label{25}\en
$k=1,2$. These are bosonic operators since they satisfy the canonical commutation rules: $[a_j,a^\dagger_k]=\delta_{j,k}\mathds{1}$. 

\vspace{2mm}

{\bf Remark:--} because of the unboundedness of  the operators involved, these commutators should be properly defined. For instance,both $[a_j,a^\dagger_k]$ and $[x_j,p_k]$ are well defined on Schwartz test functions: $[a_j,a^\dagger_k]\varphi(x)=\varphi(x)$, for all $\varphi(x)\in\Sc(\mathbb{R})$.

\vspace{2mm}

In terms of these operators the quantum version of the Hamiltonian $H$ in (\ref{24}) can be written as
\be
\left\{
\begin{array}{ll}
	H=H_0+H_I,\\
H_0=\omega\left(a_1^\dagger a_1-a_2^\dagger a_2\right),\\
	H_I=\cfrac{i\gamma}{2m}\left(a_1a_2-a_1^\dagger a_2^\dagger\right)\\
\end{array}
\right.
\label{26}\en 
In \cite{dekk,fesh} this Hamiltonian is diagonalized by using the $QU(2)$ algebra. The surprise is that, even if $H$ is (at least formally) self-adjoint, $H=H^\dagger$, its eigenvalues appear to be complex. This is quite strange, and in \cite{fesh} the authors claim that this is due to the fact that $H_I$ is only formally Hermitian {\em because the normalization integral for the eigenstates...is infinite, a result that follows from the fact that the eigenvalues are imaginary}. This is, for us, not  really a good explanation: we get imaginary eigenvalues because $H_I$ is not Hermitian, and this happens because the eigenstates cannot be normalized, and this is due to the existence of imaginary eigenvalues! This argument sounds a little bit tautological. For this reason, we propose here a different approach to the analysis of $H$, based on the generalized Bogoliubov transformation considered in \cite{nakano}. However, as we will show, while this transformation is useful {to diagonalize} $H$ making no reference to $QU(2)$ (which will be replaced in the following by formal pseudo-bosonic ladder operators, \cite{baginbagbook}), the conclusions deduced in \cite{nakano} are wrong. We will return on this point later.

Following \cite{nakano} we introduce the following operators:
\be
A_1=\frac{1}{\sqrt{2}}(a_1-a_2^\dagger), \quad A_2=\frac{1}{\sqrt{2}}(-a_1^\dagger+a_2),
\label{27}\en
as well as 
\be
B_1=\frac{1}{\sqrt{2}}(a_1^\dagger+a_2), \quad B_2=\frac{1}{\sqrt{2}}(a_1+a_2^\dagger).
\label{28}\en
These operators satisfy the following requirements:
\be
[A_j,B_k]=\delta_{j,k}\mathds{1},
\label{29}
\en
with $B_j\neq A_j^\dagger$, $j=1,2$. Moreover, $A_1=-A_2^\dagger$ and $B_1=B_2^\dagger$. The fact that $B_j\neq A_j^\dagger$ follows from the fact that the one in (\ref{27})-(\ref{28}) is not a Bogoliubov transformation, but only a generalized version of it, \cite{bagfri2}. Incidentally, in \cite{nakano} it is also given a second transformation, $A_1=\frac{1}{\sqrt{2}} \, (a_1-a_2^\dagger)=A_2^\dagger$, $B_1=\frac{1}{\sqrt{2}} \, (a_1^\dagger-a_2)=-B_2^\dagger$, which we could also use here. However, since our analysis does not change, we will restrict to the transformation in  (\ref{27})-(\ref{28}). Notice that this map is clearly reversible, since $a_j$ and $a_j^\dagger$ can be written in terms of $A_j$ and $B_j$. 

In \cite{baginbagbook} operators of this kind were analyzed in detail, producing several mathematical results (mainly on unbounded operators and biorthogonal sets of vectors), and were shown to appear often in quantum models of PT-quantum mechanics,  \cite{ben,mosta}. The main idea is that, for operators like these, we can extend the usual ladder construction used for bosons, paying some price (like, quite often, the validity of the basis property). We refer to \cite{baginbagbook} for this and other aspects of pseudo-bosonic operators.

In terms of these operators $H$ can now be written as follows:
\be
\left\{
\begin{array}{ll}
	H=H_0+H_I,\\
	H_0=\omega\left(B_1A_1-B_2A_2\right),\\
	H_I=\cfrac{i\gamma}{2m}\left(B_1A_1+B_2A_2+\mathds{1}\right),\\
\end{array}
\right.
\label{210}\en 
which only depends on the pseudo-bosonic number operators $N_j=B_jA_j$, \cite{baginbagbook}. This is exactly the same Hamiltonian found in \cite{nakano}, and it is equivalent to that given in \cite{dekk,fesh} and in many other papers on this subject. In \cite{nakano}, the authors introduce the vacuum for the annihilation operators $A_1$ and $A_2$ as the action of an unbounded operator on the vacuum of $a_1$ and $a_1$, and they construct new vectors out of this vacuum, claiming that these vectors, all together, form a Fock basis with norm equal to one. The next theorem, which is the main result of this letter, shows that this is entirely wrong and suggest a possible way out to solve this difficulty. 

\begin{prop}\label{prop1}
	There is no non-zero function $\varphi_{00}(x_1,x_2)$ satisfying $$A_1\varphi_{00}(x_1,x_2)=A_2\varphi_{00}(x_1,x_2)=0.$$ Also, there is no non-zero function $\psi_{00}(x_1,x_2)$ satisfying $$B_1^\dagger\psi_{00}(x_1,x_2)=B_2^\dagger\psi_{00}(x_1,x_2)=0.$$
\end{prop}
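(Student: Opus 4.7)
The strategy is to convert the simultaneous equations $A_1\varphi_{00} = A_2\varphi_{00} = 0$ (and similarly for $\psi_{00}$) into a pair of decoupled conditions in the position representation, one purely multiplicative and one purely differential, and then to observe that the multiplicative one alone precludes any non-zero solution as an ordinary function.

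Concretely, substituting $p_k = -i\,\partial_{x_k}$ into (\ref{25}) gives
\[
a_k + a_k^\dagger = \sqrt{2m\omega}\, x_k, \qquad a_k - a_k^\dagger = \sqrt{\tfrac{2}{m\omega}}\, \partial_{x_k},
\]
so by adding and subtracting the two equations in (\ref{27}) one finds, up to a non-zero overall constant, $A_1 + A_2 \propto \partial_{x_1} + \partial_{x_2}$ and $A_1 - A_2 \propto x_1 - x_2$. The vacuum system for $A_1, A_2$ is therefore equivalent to
\[
(x_1 - x_2)\,\varphi_{00}(x_1,x_2) = 0, \qquad (\partial_{x_1} + \partial_{x_2})\,\varphi_{00}(x_1,x_2) = 0.
\]
Read pointwise (or even just as a statement about a locally integrable function), the first equation forces $\varphi_{00}$ to vanish at every $(x_1,x_2)$ with $x_1 \neq x_2$, i.e.\ almost everywhere on $\mathbb{R}^2$. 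This already excludes any non-zero function, in particular any non-zero element of $L^2(\mathbb{R}^2)$, independently of the derivative equation.

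The second half is identical in spirit. Using (\ref{28}) together with $B_1^\dagger = \frac{1}{\sqrt{2}}(a_1 + a_2^\dagger)$ and $B_2^\dagger = \frac{1}{\sqrt{2}}(a_1^\dagger + a_2)$ yields $B_1^\dagger + B_2^\dagger \propto x_1 + x_2$ and $B_1^\dagger - B_2^\dagger \propto \partial_{x_1} - \partial_{x_2}$, so $B_1^\dagger\psi_{00} = B_2^\dagger\psi_{00} = 0$ is equivalent to $(x_1+x_2)\psi_{00} = 0$ together with $(\partial_{x_1}-\partial_{x_2})\psi_{00} = 0$, and the same pointwise argument kills every non-zero function.

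There is no real technical obstacle in the argument; it is essentially an algebraic rearrangement exploiting the fact that the transformation (\ref{27})--(\ref{28}) mixes $a_k$ with $a_k^\dagger$ of the other mode, so that the usual cancellation between the multiplicative and differential parts of a genuine annihilation operator is lost, and what remains is a singular multiplicative constraint. Note that the multiplicative equations $(x_1 \mp x_2)u = 0$ do admit non-trivial solutions in $\Sc'(\mathbb{R}^2)$, supported on the lines $x_1 = \pm x_2$; combining with the derivative conditions one easily sees that, up to scalars, $\varphi_{00} = \delta(x_1 - x_2)$ and $\psi_{00} = \delta(x_1 + x_2)$ are the only distributional ``vacua'', which is precisely the way out anticipated in the introduction.
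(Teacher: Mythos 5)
Your proof is correct and follows essentially the same route as the paper's: form the combinations $A_1\pm A_2$ (resp.\ $B_1^\dagger\pm B_2^\dagger$) to obtain the multiplicative constraint $(x_1\mp x_2)u=0$ plus a transport equation, and note that the multiplicative one already rules out any non-zero function, leaving only the distributional vacua $\delta(x_1\mp x_2)$. Your added remarks (the almost-everywhere reading of the multiplicative equation and the explicit check that the derivative condition is compatible with the delta solution) are slightly more detailed than the paper's, but the argument is the same.
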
 

\begin{proof}
Let us assume that a non-zero function $\varphi_{00}(x_1,x_2)$ satisfying $A_1\varphi_{00}(x_1,x_2)=A_2\varphi_{00}(x_1,x_2)=0$ does exist. Hence we should have also $(A_1-A_2)\varphi_{00}(x_1,x_2)=0$ and $(A_1+A_2)\varphi_{00}(x_1,x_2)=0$, so that
$$
(x_1-x_2)\varphi_{00}(x_1,x_2)=0, \qquad (\partial_1+\partial_2)\varphi_{00}(x_1,x_2)=0.
$$
It is clear that there is no non-zero solution of the first equation. The only solution is a distribution: $\varphi_{00}(x_1,x_2)=\alpha\delta(x_1-x_2)$, $\alpha\in\mathbb{C}$.

The proof for $\psi_{00}(x_1,x_2)$ is completely similar: computing $(B_1^\dagger+B_2^\dagger)\psi_{00}(x_1,x_2)=0$ and
 $(B_1^\dagger-B_2^\dagger)\psi_{00}(x_1,x_2)=0$ we easily get
 $$
 (x_1+x_2)\psi_{00}(x_1,x_2)=0, \qquad (\partial_1-\partial_2)\psi_{00}(x_1,x_2)=0,
 $$
 for which exists only the weak solution $\psi_{00}(x_1,x_2)=\beta\delta(x_1+x_2)$, $\beta\in\mathbb{C}$.

\end{proof}

\vspace{2mm}

We see that the proof of the proposition is extremely easy and, because of this, particularly interesting. Notice also that, compared with other papers on the same subject, see \cite{bag1,bag2} for instance, we are not assuming here that $\varphi_{00}(x_1,x_2)$ and $\psi_{00}(x_1,x_2)$ belong to some specific Hilbert space. For this reason, our result does not depend on the metric we can introduce in $\Lc^2(\mathbb{R}^2)$ to take care of possible divergences in the norms of the eigenfunctions of $H$: stated differently, replacing $\left<f,g\right>=\int_{\mathbb{R}^2}\overline{f(x_1,x_2)}\,g(x_1,x_2)\,dx_1\,dx_2$ with some $\left<f,g\right>_w=\int_{\mathbb{R}^2}\overline{f(x_1,x_2)}\,g(x_1,x_2)\,w(x_1,x_2)\,dx_1\,dx_2$, for any choice of weight function $w(x_1,x_2)$, does not affect our proposition.

\subsection{The {overdamped} case, $\omega^2<0$}
The previous results have been obtained under the constraint $\omega^2>0$ which allowed to define the bosonic operators \eqref{25}.
In the  $\omega^2<0$ case, that for simplicity we assume corresponding to $\omega=e^{i\pi/2}\,\tau,\,\tau\in\mathbb{R}_+$, easy computations show that the commutators satisfy $[a_j,a_k^\dag]=i\delta_{j,k}\mathds{1}$, and hence they are not bosonic operators.

This issue is solved introducing the following operators\footnote{The principal square root of $\omega$ has been chosen.} 
\bea a_k=e^{i\pi/4}\sqrt{\frac{m\tau}{2}}x_k+ie^{-i\pi/4}\sqrt{\frac{1}{2m\tau}}p_k,\quad b_k= e^{i\pi/4}\sqrt{\frac{m\tau}{2}}x_k-ie^{-i\pi/4}\sqrt{\frac{1}{2m\tau}}p_k,
\label{211}\ena
$k=1,2$..
Of course, $a_k^\dag\neq b_k, \, k=1,2$, but these operators satisfy the \textit{extended} canonical commutation rules $[a_j,b_k]=\delta_{j,k}\mathds{1}$, which means that  $a_k,b_k,\, k=1,2$, define two pairs of formal pseudo-bosonic operators, \cite{baginbagbook}.

We can now use $b_k$ in place of $a_k^\dagger$ in \eqref{27}-\eqref{28} and define the operators
\bea
A_1&=&\frac{1}{\sqrt{2}}(a_1-b_2), \quad A_2=\frac{1}{\sqrt{2}}(-b_1+a_2),\\
B_1&=&\frac{1}{\sqrt{2}}(b_1+a_2), \quad B_2=\frac{1}{\sqrt{2}}(a_1+b_2),
\label{212}\ena
which again satisfy $[A_j,B_k]=\delta_{j,k}\mathds{1}$.
With this construction the Hamiltonian has the same form and properties of \eqref{210}, and the results given in Proposition \ref{prop1} are still valid.

{\bf Remarks:--} (1) the case $\omega^2=0$ is not really interesting for us, in the present context, since in this case the use of bosonic or pseudo-bosonic operators is quite unlikely, {being} $H$   (\ref{24})  no longer quadratic in $x_j$. 

(2) It is possible to check that nothing really changes if we consider a quantized non commutative version of the Hamiltonian in  (\ref{24}), i.e. if we assume that $[x_1,x_2]=i\theta$ and $[p_1,p_2]=i\nu$, other than having $[x_j,p_j]\neq0$. Using the same ideas as in \cite{bagfri} and  adopting a Bopp shift, we recover a similar no-go result.

\section{Conclusions}

We have seen that the idea of working in an Hilbert space of square integrable functions when dealing with the Bateman lagrangian for the DHO does not work.

Also, an explicitly time-dependent Hamiltonian of the kind proposed by Caldirola and by Kanai independently, \cite{Caldi, dekk2,Kanai}, restores square-integrability of the eigenstates, which can be exactly computed, \cite{baldiotti}, but again produce {\em strange} imaginary eigenvalues for the formally self-adjoint Caldirola-Kanai Hamiltonian $H_{CK}$, which are explained considering the domain of the operator $xp_x+p_xx$. Moreover, as discussed for instance in \cite{crespo}, the role of $H_{CK}$ is not fully accepted in the context of the DHO: in fact, in \cite{crespo} one of the main conclusions is that $H_{CK}$ {\em does not describe ...a damped harmonic oscillator of mass $m_0$ ...but a particle of mass $m(t)=m_0e^{\alpha t}$ subject to a force $F(t)= m_0e^{\alpha t}x$....}. This is just a small evidence of the fact that quantization of dissipative systems is an hard topic, which is still not completely understood.


{
 The main result of this letter suggests that the possible way-out to overcome the mathematical issues arising in the quantization of the DHO is to reconsider the system in a distributional setting. Of course, one needs to define the proper ladder operators used to write the Hamiltoanian, control the notion of biorthogonality of the eigenstates, and check whether this approach has some concrete usefulness for the analysis of the DHO. This is work in progress.
}

\section*{Acknowledgements}
The authors acknowledge partial support from Palermo University. F.B. and F.G.  acknowledge partial support from G.N.F.M. of the I.N.d.A.M.
F.G. acknowledges support by M.I.U.R.

\end{document}